\newtheorem{thm}{Theorem}[section]
\newtheorem{lemma}[thm]{Lemma}
\newtheorem{cor}[thm]{Corollary}
\newtheorem{defin}[thm]{Definition}
\newcommand{\R}{\mathbb{R}} 
\def\fif{{\rm fi}}
\title{\LARGE \bf
Explicit model predictive control accuracy analysis
}
\author{Andrew Knyazev$^{1,2}$, Peizhen Zhu$^{1}$ and Stefano Di Cairano$^{1,3}$
\thanks{$^{1}$Mitsubishi Electric Research Laboratories; 201 Broadway
Cambridge, MA 02139, USA}%
\thanks{$^{2}${\tt\small knyazev@merl.com}}	
	\thanks{$^{3}${\tt\small dicairano@merl.com}}
	\thanks{$^{}$Accepted to IEEE CDC 2015 conference}%
}
\begin{document}

\maketitle
\thispagestyle{empty}
\pagestyle{empty}

\begin{abstract}
Model Predictive Control (MPC) can efficiently control constrained systems in real-time applications.  
MPC feedback law for a linear system with linear inequality constraints can be explicitly computed off-line, which results in an off-line partition of the state space into non-overlapped convex regions, with affine control laws associated to each region of the partition.
An actual implementation of this explicit MPC in low cost micro-controllers requires the data to be 
``quantized'', i.e. represented with a small number of memory bits. 
An aggressive quantization decreases the number of bits and the controller manufacturing costs, and may increase 
the speed of the controller,  but reduces accuracy of the control input computation. 
We derive upper bounds for the absolute error in the control depending on the number of quantization bits and system parameters. 
The bounds can be used to determine how 
many quantization bits are needed in order to guarantee a specific level  of accuracy in the control input.
\end{abstract}
\section{INTRODUCTION}
Model predictive control (MPC) \cite{rawlings2009model} is an efficient method  for control design 
of 
multivariable constrained systems in chemical and process control, automotive, aerospace, and 
factory automation~\cite{QB03,di2012industry,hrovat2012}. MPC solves a 
constrained optimal control problem in real time (on-line).

Explicit MPC (EMPC)~\cite{bem2002,Bemporad2002} may reduce on-line computational costs and code 
complexity by pre-computing the MPC feedback law as a state feedback, thus making it viable for 
fast applications with limited computational 
capabilities~\cite{DYBKH11,DDKH12,DTBB12,di2012model}. 
In particular, for linear 
systems subject to linear constraints and cost function based on $1$-norm, $\infty$-norm, or 
squared $2$-norm, the EMPC results in a polyhedral piecewise affine (PWA) feedback law. 
Thus, during the on-line execution, the EMPC controller first identifies which polyhedral region 
contains the current state, and then computes the control action by evaluating the corresponding 
affine control law.  
The identification of the polyhedral region is referred to as the point location problem 
\cite{Goodman97},  which can be solved by sequential search 
and binary search tree see, e.g., \cite{Tonde03,bayat2011}. Due to the exponential increase of the 
number of regions with respect to the number of constraints in the MPC problems,  
techniques for reducing complexity of the EMPC feedback law while maintaining its 
most important properties have been proposed, see, e.g.,~\cite{geyer04,kvasnica11,kvasnica12} 
and references therein. 

In practice, the data of EMPC have to be typically stored in a micro-controller hardware  memory, so that every stored number is 
represented by a small fixed number of bits for every number in the data. 
In other words, the data cannot be stored exactly and hence a precision loss occurs. We call this 
reduction of 
precision ``quantization'' and the reduced precision data ``quantized'' data.  
The method for quantization can be as simple as rounding. 
Aggressive quantization has the 
advantage of decreasing memory requirements and increasing the speed of the control input 
evaluation, at the price of introducing inaccuracy in the computation of the control input. If the 
quantization precision is too small, the controller can fail to accurately determine the region for the 
current state of the controlled system, and thus, the control. For example, 
by quantizing the state measurement/estimate data the quantized state may  jump to a different region.

The effect of quantization has been investigated for implicit MPC for instance in~\cite{kerrigan12,longo14}.
We  investigate the resulting accuracy in the control input computation in  EMPC as a 
consequence of different quantization precisions, so that we can determine how many bits need to be 
used to  guarantee a desired level of accuracy in the control input. A~brief overview of our approach and results is in Section~\ref{sec:preliminaries}.
 
In Section~\ref{sec:analysis}, we provide a mathematical 
accuracy analysis, 
depending on mutual positions of the exact and the quantized system states.
When the quantization does not 
affect the system state region, so that the same feedback law applies to both the exact and the quantized system states, 
an error bound is easy to establish. 
A difficult case for analysis, leading to a much larger possible controller inaccuracy, is where the quantization makes the system state to jump over a region facet to a different region. In this case, bounding the  accuracy of the control requires taking into account not only quantization precision for the system state, but also quantization effects of the region facets and  of the feedback laws in different regions. We derive two kinds of upper bounds on the accuracy of the control input computation 
in Section~\ref{sec:analysis}. Bounds without knowledge of the quantized data, describing the worst case scenario, 
for that reason are called ``a priori.'' After a quantized implementation of the controller is determined, the a priori bounds are improved, using the already known off-line quantized data, in addition to the original data. 
The resulting tighter ``a posteriori bounds'' depend on the quantization precision of the current state. 
We show how our bounds can be improved by  exploiting a rescaling technique that makes the system state space evenly sized in all spacial directions. 

We validate the bounds numerically in Section~\ref{sec:test}. 

\section{Brief overview}\label{sec:preliminaries}
Throughout this paper, $\R$ denotes the set of real numbers,  $\|\cdot\|_1$, $\|\cdot\|_2$, and  $\|\cdot\|_\infty$ denote $1$-norm, $2$-norm, and  $\infty$-norm, respectively. $A^{\prime}$ denotes the transpose of $A.$ 

The control $u(x)=\left\{u_{1}(x),\ldots,u_{n_r}(x)\right\}$
is a continuous PWA function determined by the EMPC control law  
 \begin{equation}\label{eq:cl}
  u_i(x)=F_ix+G_i \quad \forall x\in P_i,\,  i=1,\ldots, n_r,
 \end{equation}
where $x\in \R^{n}$, the gains $F_i\in \R^{m\times n}$ and offsets $G_i\in \R^m,$ and $n_r$ denotes the number of regions
\[P_i=\mbox\{x\in \R^{n}| H_ix\leq K_i, \, H_i\in \R^{{n_c^i}\times n},  K_i\in  \R^{n_c^{i}}\},\]  
with
	$H_i=\left[{H_{i}^{1}}^{\prime},\ldots,{H_{i}^{{n_c}^{i}}}^{'}\right]^{\prime}$ and  $K_i=[ K_{i}^{1},\ldots, K_{i}^{{n_c}^{i}}]^{\prime}.$ 
	
The controller  determines on-line, i.e. in real time, which region contains the given state $x$. 
This is typically the most time consuming operation, for large $n$ and $n_r$, requiring computing numerous matrix-vector products  $H_ix$.  
If the state $x$ is in the region $P_i$, then the corresponding control law \eqref{eq:cl} is used to compute the control input $u(x)$. 
All the controller ``true'' data, determining the regions, and control law \eqref{eq:cl} are computed off-line, typically in the double precision computer arithmetic, and then are quantized and stored in a memory of the controller as quantized data.

The reduction of the precision of the data used by the controller decreases the number of bits stored by the controller and increases the speed of the control, but reduces the accuracy of the on-line execution of the controller. The target accuracy of the controller can be based on the accuracy of the sensor for sensing, or of the estimator for estimating the current state of the system. The state of the controlled system determined with quantized data deviates from the state of the controlled system determined with the true data within the limits depending on the control law, the data representing the system, and the precisions of the quantization. Thus, for any reduction of the precision of the quantized data, it is possible to determine off-line bounds for a maximal deviation of the state of the controlled system caused by that reduction. Using these bounds, different reductions in the precision, compared to the true precision, can be tested, and the maximal reduction of the quantization satisfying the accuracy requirement of the control can be selected.

It may also be advantageous to use different quantization precisions for various data. For example, the EMPC control law $u_i(x)=F_ix+G_i$ is evaluated only once on-line, for the determined index $i$, so the gains $F_i$ and offsets $G_i$ can be quantized in high precision and stored in a slow-access memory without noticeably affecting the controller speed. Moreover, different facets can benefit from using different numbers of bits in their quantized format to represent the same level of accuracy in the control, while speeding up the computationally challenging point location on-line search. 

However, in some bounds below, for simplicity of presentation we suppose that all data are quantized using the same scalar quantization function $f(z)$, assuming that $\hat{z}=f(z)=z+\Delta z$, where $|\Delta z|\leq \epsilon$ and $0\leq\epsilon\leq 1$, aiming at a fixed point quantization, rather than a floating point rounding. 
 From the true state $x$, we obtain the quantized state $\hat{x}=x+\Delta x$, and,  similarly, 
 $
 \hat{u}_i(\hat{x})=\hat{F}_i\hat{x}+\hat{G}_i,
 $
where $\hat{H}_i\hat{x}\leq \hat{K}_i$ for $i=1,\ldots, n_r$. We commonly use  the symbol ``hat'' to  denote the data after the quantization, and the symbol ``$\Delta$'' for the quantization error.

\section{Accuracy analysis}\label{sec:analysis}
In this section, we focus on  accuracy analysis of the control input computation. The control input depends on the location of the state vector, thus, to analyze the accuracy  of the control input computation we have to examine how the state changes and which region it falls into before and after the quantization.   
When it falls into the region with the same index, i.e.\ $\hat{x}$ is in the region $\hat{P}_i$ and $x$ is in the region $P_i$, the absolute accuracy of the control input is measured by the maximum absolute changes between $u_i(x)$ and $\hat{u}_i\left(\hat{x}\right)$, i.e., $\|\hat{u}_i\left(\hat{x}\right)-u_i(x) \|_\infty$, and can be easily bounded above by the precisions of the quantization.  

The control computation error is generally larger if the original state and the quantized state belong to different regions with large and different gains. We analyze the case where $\hat{x}$ is in the region $\hat{P}_i$ and $x$ is in the region $P_j$, with $i\neq j$, which means the state $x$ is in one of the regions, but after quantization the state $\hat{x}$ gets into another region, so the  accuracy of the control input is $\|\hat{u}_i\left(\hat{x}\right)-u_j(x) \|_\infty$, as illustrated in Fig.~\ref{fig:jumpOver}, where $n=m=1.$   
Fig.~\ref{fig:jumpOver} shows that the change in the slope (gain) when the state jumps over the hyperplane (a point in Fig.~\ref{fig:jumpOver} , since  $n=1$) can increase the control computation error. 

\begin{figure}[ht]
 \begin{center}$
 \begin{array}{cc}
 \includegraphics[width=1\linewidth]{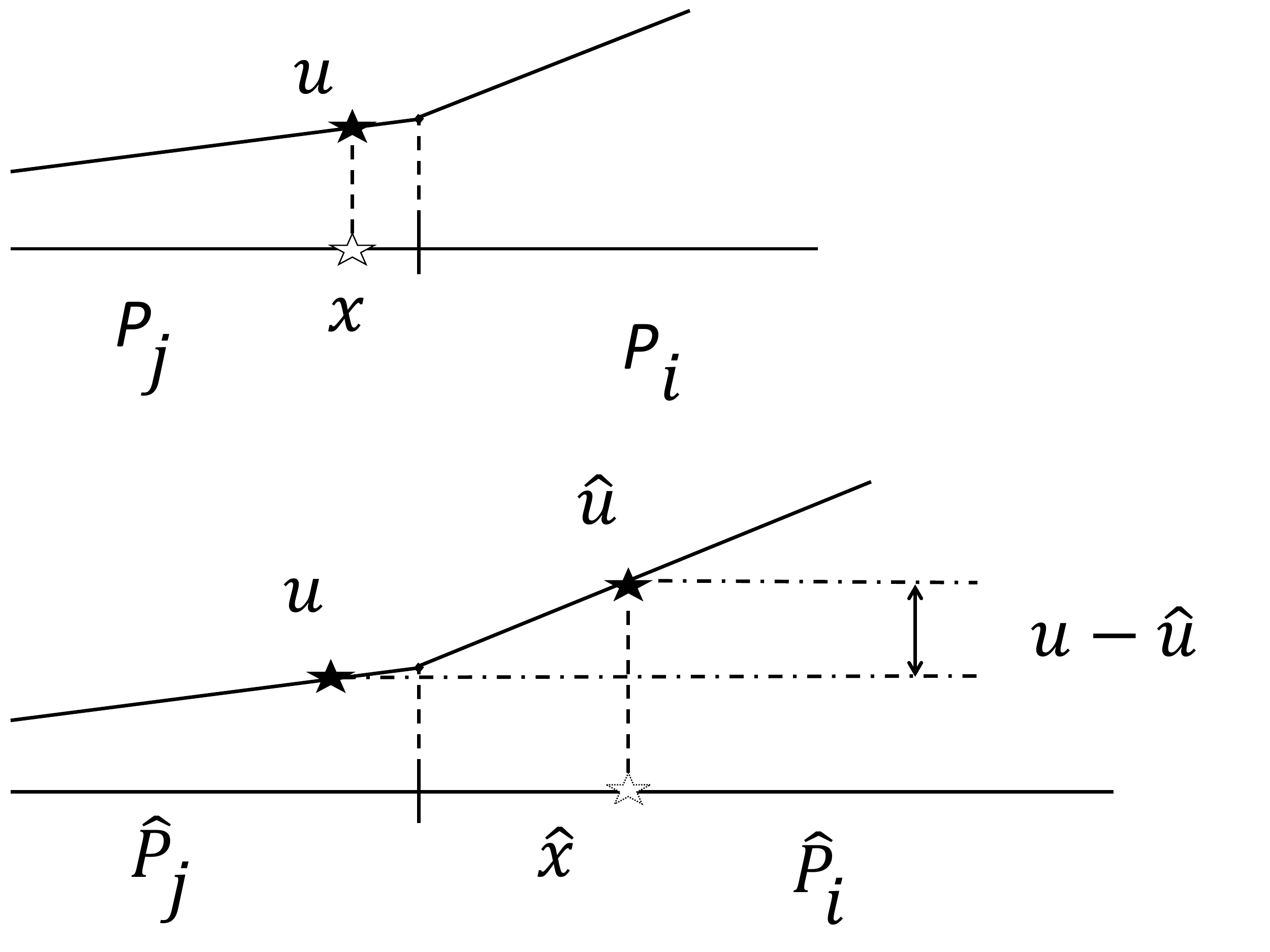}
 \end{array}$
 \end{center}												
		 \caption{ Location of the state within the regions. 
		 Top: the true data, where the true state $x$ is in the true region $P_i$ giving the true control $u$. Bottom: the quantized data, where the quantized state $\hat{x}$ jumps into the quantized region $\hat{P}_j$ having a different gain(slope) of the quantized control $\hat{u}$, resulting in a large error $u-\hat{u}$.}\label{fig:jumpOver}
 \end{figure}

However, the jump can only happen if a distance from the state vector to the facet separating the two regions is small, and the PWA control function is continuous, which allows us to bound the error even in the case of the jump. 
 
The sharp upper bound  of $\left\|\hat{u}(f(x))-u(x)\right\|_\infty$ is 
\[\max_{i,\, j}\max_{x\in P_j,\, f(x)\in\hat{P}_i}\|\hat{F}_i\ f(x)+\hat{G}_i-F_jx-G_j\|_\infty.\]
We derive an analytic bound for a pair $i$ and $j$ under simplifying assumptions, e.g.,\ that $P_i$ and $P_j$ share a facet. 		We start with analyzing the accuracy of the hyperplane representation, where the data are quantized.

\begin{lemma}\label{lem:1}
Let $hx\leq k$ be a half-space. Let $y=hx-k$ and $\hat{y}=(h+\Delta h)(x+ \Delta x)-(k+\Delta k)$  with $\|\Delta h\|_\infty\leq \epsilon$, $\|\Delta x\|_\infty\leq \epsilon$, and $|\Delta k|\leq \epsilon$  for some $\epsilon \geq 0$. We have%
\begin{equation}
|\hat{y}-y|\leq\delta := \epsilon (\|h\|_1+\|x\|_1+n\epsilon+1). \label{eqn:lem1eq}
\end{equation}
\end{lemma}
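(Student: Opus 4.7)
The plan is to compute $\hat y - y$ directly from its definition, keep the four perturbation terms that survive the cancellation, then bound each one using the hypotheses on $\epsilon$ together with Hölder's inequality.

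First, I would expand
\[
\hat y - y = (h+\Delta h)(x+\Delta x) - (k+\Delta k) - hx + k,
\]
cancel the leading term $hx - k$, and rewrite the remainder as the sum of four scalar terms
\[
\hat y - y \;=\; h\,\Delta x \;+\; \Delta h\, x \;+\; \Delta h\,\Delta x \;-\; \Delta k.
\]
The triangle inequality then reduces the problem to bounding each term separately.

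For the two linear cross terms, I would apply Hölder's inequality $|a\,b| \le \|a\|_1\|b\|_\infty$ in the form that puts the $\infty$-norm on the perturbation: $|h\,\Delta x| \le \|h\|_1\|\Delta x\|_\infty \le \epsilon\|h\|_1$, and symmetrically $|\Delta h\, x| \le \|x\|_1\|\Delta h\|_\infty \le \epsilon\|x\|_1$. The constant term is immediate: $|\Delta k|\le \epsilon$. The only slightly delicate piece is the quadratic cross term $|\Delta h\,\Delta x|$. Here I would use Hölder again but then convert $\|\Delta x\|_1 \le n\|\Delta x\|_\infty \le n\epsilon$ using the fact that $\Delta x\in\R^n$, giving $|\Delta h\,\Delta x| \le \|\Delta h\|_\infty\|\Delta x\|_1 \le n\epsilon^2$. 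This is the origin of the $n\epsilon$ factor in \eqref{eqn:lem1eq}.

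Adding the four bounds yields $|\hat y - y| \le \epsilon\|h\|_1 + \epsilon\|x\|_1 + n\epsilon^2 + \epsilon$, which factors as $\epsilon(\|h\|_1 + \|x\|_1 + n\epsilon + 1)$, exactly $\delta$. There is no real obstacle; the only point that requires care is not double-counting the dimension: the factor of $n$ arises solely from the $\|\cdot\|_1$ versus $\|\cdot\|_\infty$ conversion in the quadratic term, while the two linear cross terms stay dimension-free because each pairs a one-norm of true data with an infinity-norm of perturbation.
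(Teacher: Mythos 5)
Your proposal is correct and follows the paper's own argument essentially verbatim: the same expansion into the four surviving terms $h\Delta x+\Delta h\,x+\Delta h\,\Delta x-\Delta k$, the same H\"older-type estimates pairing a $1$-norm of the true data with an $\infty$-norm of the perturbation, and the same $\|\Delta x\|_1\leq n\|\Delta x\|_\infty$ conversion producing the $n\epsilon^2$ term. Nothing to add.
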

\begin{proof}
By direct calculation, we obtain
\begin{align*}
 |\hat{y}-y|&=|(h+\Delta h)(x+\Delta x)-(k+\Delta k)-(hx-k)|\\
&=|h\Delta x+\Delta h x+\Delta h \Delta x-\Delta k|\\
               &\leq |h\Delta x|+|\Delta h x|+|\Delta h \Delta x|+|\Delta k|\\
							&\leq \|h\|_1 \|\Delta x\|_\infty + \\
							&\qquad \|\Delta h\|_\infty\|x\|_1+\|\Delta h\|_\infty \|\Delta x\|_1+|\Delta k|\\
							&\leq \epsilon\|h\|_1+  \epsilon\|x\|_1+n\epsilon^2+\epsilon.
\end{align*}
\end{proof}
\begin{cor}\label{cor:dist}
Let the hyperplane $hx=k$ border two neighboring regions $P_i$ and $P_j$, such that $hx\leq k$ for $x\in P_j$ and $hx\geq k$ for $x\in P_i.$ 
Let the state $x\in P_j$, but after the quantization $\hat{x}\in\hat{P}_i$ with $i\neq j$. 
Then $-\delta< y\leq 0\leq \hat{y}< \delta$.	
\end{cor}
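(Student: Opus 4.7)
The plan is to read off each half of the displayed chain of inequalities directly from the hypotheses, and then glue them together using Lemma~\ref{lem:1}. First, I would use the defining half-space inequality of $P_j$: since $x\in P_j$ and $P_j\subset\{x:hx\le k\}$, we immediately get $y=hx-k\le 0$. Symmetrically, the orientation assumption $hx\ge k$ for $x\in P_i$ passes through the quantization to $\hat h\hat x\ge \hat k$ for $\hat x\in\hat P_i$, which gives $\hat y=\hat h\hat x-\hat k\ge 0$. So at this point, purely from region membership, we already have $y\le 0\le \hat y$, without yet invoking any quantization error bound.

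Next, I would invoke Lemma~\ref{lem:1} to obtain $|\hat y-y|\le \delta$ with $\delta$ as in \eqref{eqn:lem1eq}. Combining this with the sign information above, $\hat y-y=\hat y+|y|\ge 0$, so the absolute value can be dropped and the estimate reads $\hat y+|y|\le\delta$. Since both summands are nonnegative, this single inequality splits at once into the two one-sided bounds $\hat y\le\delta$ and $-\delta\le y$, giving the chain $-\delta\le y\le 0\le \hat y\le\delta$.

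The only remaining task is to sharpen the two outer inequalities to strict ones. The argument I would give is that equality at either end, $\hat y=\delta$ or $y=-\delta$, forces $\hat y+|y|=\delta$ to saturate, which in turn forces the other of $y,\hat y$ to vanish exactly. That in turn places $x$ on the facet $hx=k$ or $\hat x$ on the facet $\hat h\hat x=\hat k$, a degenerate situation inconsistent with the ``genuine jump'' $i\neq j$ between the two distinct regions. I do not expect a real technical obstacle; the only point worth double-checking is the orientation bookkeeping, namely that the inequality characterizing $P_i$ at the shared facet is preserved under quantization with the same direction for $\hat P_i$, so that $\hat y\ge 0$ indeed follows from $\hat x\in\hat P_i$. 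Once this is confirmed, the corollary is a one-line consequence of Lemma~\ref{lem:1}.
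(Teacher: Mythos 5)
Your proposal follows essentially the same route as the paper: sign information ($y\le 0$ from $x\in P_j$, $\hat y\ge 0$ from $\hat x\in\hat P_i$) combined with the bound $|\hat y-y|\le\delta$ from Lemma~\ref{lem:1} yields the chain, and your strictness discussion is the same (equally informal) appeal to the jump being genuine that the paper makes by asserting $\hat x$ lies strictly outside $\hat P_j$. The argument is correct and matches the paper's proof in substance.
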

\begin{proof}By Lemma \ref{lem:1}, 
$|\hat{y}-y|\leq \delta$.
After quantization, $\hat{x}$ falls into the region $\hat{P}_i$ which means $\hat{x}$ is out of region $\hat{P}_j$. So,
$\delta+y>0$. Therefore, we have
$-\delta< y\leq 0.$
Similarly, we can obtain that  $0\leq \hat{y}< \delta$. 
\end{proof}

Corollary \ref{cor:dist} states  that, as a result of the quantization, the state can jumps to the other side of a hyperplane only if the distance from the state to the hyperplane is less than $\delta$, as illustrated in Fig.~\ref{fig:2dreg} for $n=2.$	 
\begin{figure}[ht]
 \begin{center}$
 \begin{array}{cc}
 \includegraphics[width=1\linewidth]{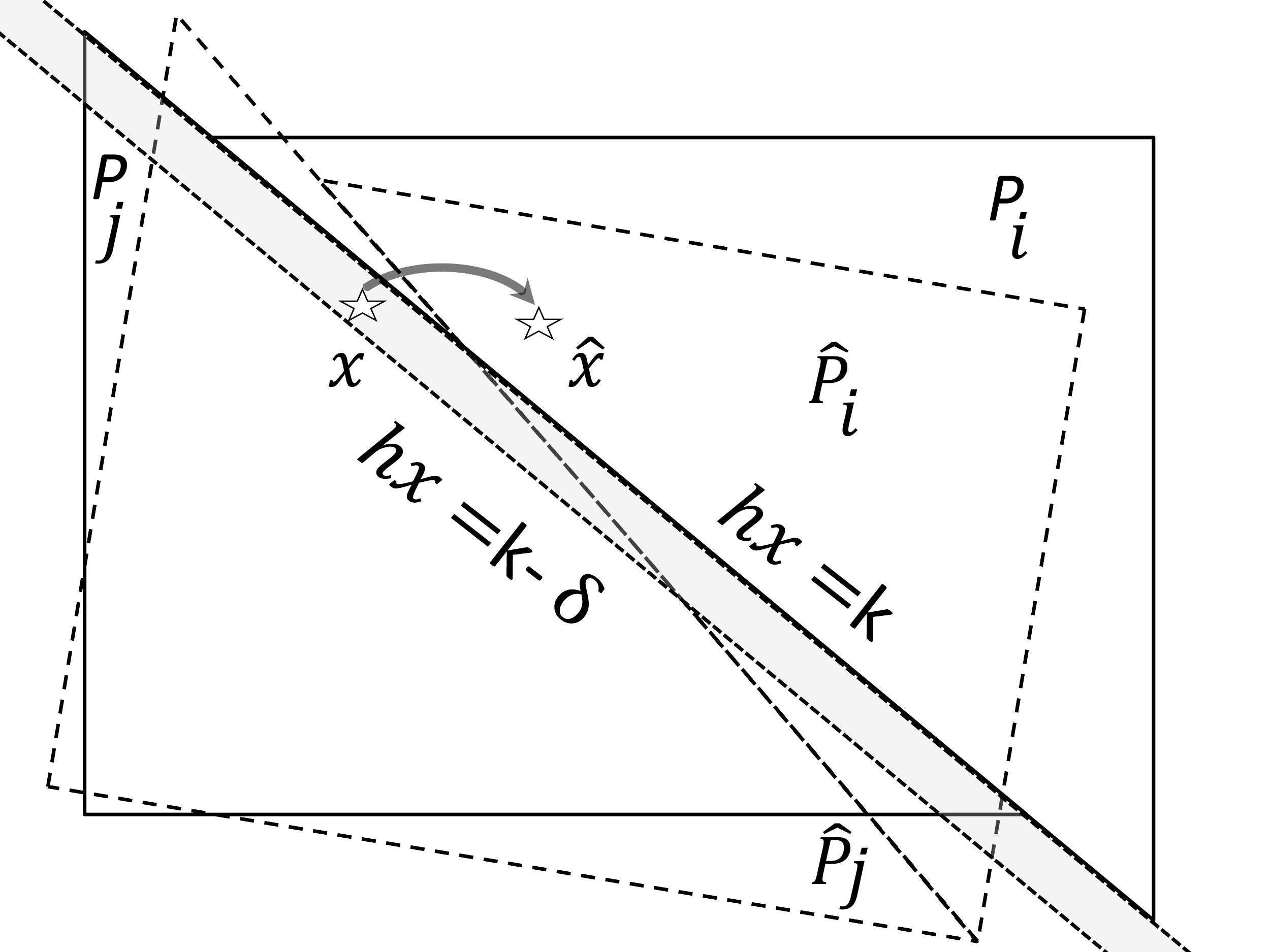}
 \end{array}$
 \end{center}												
		 \caption{ The state $x$ in $P_i$ after quantization turns into $\hat{x}$ in $\hat{P}_j$. }\label{fig:2dreg}
 \end{figure}

We are now ready to prove our main result. 

\begin{thm}\label{thm:thm1_update}
Let the hyperplane $hx=k$ border two neighboring regions $P_i$ and $P_j$, such that $hx\leq k$ if $x \in P_j$ and $hx \geq k$ if  $x \in P_i$.
Let the state $x$ be in the region $P_j$,
the orthogonal projection of $x$ on the hyperplane $hx=k$ be in a facet of $P_i$, and 
$\hat{x}=f(x)\in\hat{P}_i$. Then 
 \begin{align*}
			\|\hat{u}(\hat{x})-u(x)\|_\infty
												&\leq	\frac{\delta }{\|h\|_2^2}\|(F_i-F_j)h'\|_\infty+ \nonumber\\
												&\quad \|F_i \Delta x+\Delta F_ix +\Delta F_i\Delta x +\Delta G_i\|_\infty,
											      \end{align*}
														where $\delta=\epsilon (\|h\|_1+\|x\|_1+n\epsilon+1)$ and further 
 \begin{align}\label{eqn:aposteriori}
 							&\|F_i \Delta x+\Delta F_ix +\Delta F_i\Delta x +\Delta G_i\|_\infty \nonumber\\
												&\quad \quad\leq\|\Delta F_i\|_\infty \|x\|_\infty+\|\Delta G_i\|_\infty+\|\hat{F}_i\|_\infty\epsilon,					
							\end{align}	
							or, alternatively, 	
							 \begin{align}\label{eqn:apriori}
 							&\|F_i \Delta x+\Delta F_ix +\Delta F_i\Delta x +\Delta G_i\|_\infty \nonumber\\
												&\quad \quad\leq\epsilon(\|F_i\|_\infty+n\|x\|_\infty+n\epsilon+1).							
							\end{align}						
\end{thm}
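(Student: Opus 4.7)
The plan is to decompose the difference
$$\hat{u}(\hat{x})-u(x) \;=\; \bigl[\hat{u}_i(\hat{x})-u_i(x)\bigr] \;+\; \bigl[u_i(x)-u_j(x)\bigr]$$
and apply the triangle inequality in $\|\cdot\|_\infty$ so the overall bound splits into the two summands appearing in the statement. The first bracket collects the pure quantization perturbations on the affine piece indexed by $i$, while the second bracket is a region-jump term which would vanish if $F_i=F_j$ and $G_i=G_j$ and which otherwise must be controlled using the proximity of $x$ to the shared facet guaranteed by Corollary \ref{cor:dist}.

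For the region-jump term, the key step is to invoke continuity of the PWA control at the facet. Let $x^{\ast}$ denote the orthogonal projection of $x$ onto $\{hx=k\}$. By hypothesis $x^{\ast}$ lies in a facet of $P_i$ shared with $P_j$, so continuity gives the matching condition $F_i x^{\ast}+G_i=F_j x^{\ast}+G_j$. Subtracting, the offsets cancel and we obtain the identity $u_i(x)-u_j(x)=(F_i-F_j)(x-x^{\ast})$. The closed form for the projection, $x-x^{\ast}=\frac{hx-k}{\|h\|_2^2}\,h'=\frac{y}{\|h\|_2^2}\,h'$, then yields $u_i(x)-u_j(x)=\frac{y}{\|h\|_2^2}(F_i-F_j)h'$. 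Taking $\|\cdot\|_\infty$ and applying Corollary \ref{cor:dist} to bound $|y|\leq\delta$ delivers the first summand $\frac{\delta}{\|h\|_2^2}\|(F_i-F_j)h'\|_\infty$.

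For the quantization term I would substitute $\hat{F}_i=F_i+\Delta F_i$, $\hat{G}_i=G_i+\Delta G_i$, $\hat{x}=x+\Delta x$ into $\hat{u}_i(\hat{x})-u_i(x)$; the $F_i x$ and $G_i$ contributions cancel, leaving exactly $F_i\Delta x+\Delta F_i x+\Delta F_i\Delta x+\Delta G_i$, which is the vector whose $\infty$-norm is the second summand. To derive the a posteriori inequality \eqref{eqn:aposteriori}, I would regroup $F_i\Delta x+\Delta F_i\Delta x=(F_i+\Delta F_i)\Delta x=\hat{F}_i\Delta x$, apply the triangle inequality and the submultiplicative bound for the induced $\infty$-norm, and use $\|\Delta x\|_\infty\leq\epsilon$ together with $\|\Delta G_i\|_\infty\leq\epsilon$. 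For the a priori inequality \eqref{eqn:apriori}, I would instead keep the four-term form and additionally bound $\|\Delta F_i\|_\infty\leq n\epsilon$ (maximum absolute row sum of a matrix with entries bounded by $\epsilon$), giving $\epsilon\|F_i\|_\infty+n\epsilon\|x\|_\infty+n\epsilon^2+\epsilon$, then factor out $\epsilon$.

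The main obstacle is the geometric step: recognizing that continuity of the PWA control at the shared facet point $x^{\ast}$ allows the jump $u_i(x)-u_j(x)$ to be rewritten as the directional increment $(F_i-F_j)(x-x^{\ast})$, and that the orthogonal-projection hypothesis provides the explicit scalar multiple $x-x^{\ast}=\frac{y}{\|h\|_2^2}h'$, so the bound can be expressed purely in terms of quantities already controlled by Lemma \ref{lem:1}. Once that reduction is in place, the remaining quantization bookkeeping is routine triangle-inequality manipulation in the spirit of Lemma \ref{lem:1}.
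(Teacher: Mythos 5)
Your proposal is correct and follows essentially the same route as the paper's proof: the same triangle-inequality decomposition into a region-jump term and a quantization term, the same use of continuity at the orthogonal projection $x_p$ onto the shared facet to express $u_i(x)-u_j(x)$ as $\frac{hx-k}{\|h\|_2^2}(F_i-F_j)h'$ bounded via Corollary \ref{cor:dist}, and the same regrouping $F_i\Delta x+\Delta F_i\Delta x=\hat{F}_i\Delta x$ for the a posteriori bound versus term-by-term estimates with $\|\Delta F_i\|_\infty\leq n\epsilon$ for the a priori bound.
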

\begin{proof} 
By the triangle inequality,  we have
\begin{align*}
			\|\hat{u}(\hat{x})-u(x)\|_\infty
			&=\|\hat{u}_i(\hat{x})-u_j(x)\|_\infty\\
			&=\|\hat{F}_i\hat{x}+\hat{G}_i-F_jx-G_j\|_\infty \\
														&\leq \|F_ix+G_i-F_jx-G_j\|_\infty+\\
														&\quad \|F_i \Delta x+\Delta F_ix +\Delta F_i\Delta x +\Delta G_i\|_\infty.
																									      \end{align*}	
We first bound above $\|F_ix+G_i-F_jx-G_j\|_\infty$, where $x$ satisfies	$-\delta< hx- k\leq 0$ by Corollary \ref{cor:dist}. 
Let $x_p$ denote the orthogonal projection of $x$ on the hyperplane $hx=k$, then 
$x_p=x+th',$
where $h'$ is the transpose of $h$ and $t$ is a scalar. We have 
$k=hx_p=hx+thh'=hx+t\|h\|^2_2.$		
Therefore, $|t|=|hx-k|/\|h\|_2^2.$ Since $u(x)$ is a linear continuous affine function, 
we have 
$
F_ix_p+G_i=F_jx_p+G_j$, which is equivalent to
$
F_i(x+th')+G_i=F_j(x+th')+G_j.
$
Hence,
\begin{align*}
\|F_ix+G_i-F_jx-G_j\|_\infty&=\|t(F_i-F_j)h'\|_\infty\\ 
                            &=|t|\|(F_i-F_j)h'\|_\infty\\ 
														&=\frac{|hx-k|}{\|h\|_2^2}\|(F_i-F_j)h'\|_\infty\\ 
														&\leq \frac{\delta}{\|h\|_2^2}\|(F_i-F_j)h'\|_\infty.
\end{align*}

The second term in the triangle inequality is  
\begin{align*}
					&\|F_i \Delta x+\Delta F_ix +\Delta F_i\Delta x +\Delta G_i\|_\infty	\\
						&\quad=	\|(\Delta F_ix+\Delta G_i)+(F_i \Delta x+\Delta F_i\Delta x)\|_\infty\\
						&\quad\leq\|\Delta F_i x+\Delta G_i\|_\infty+\|\hat{F}_i\|_\infty\|\Delta x\|_\infty\\
						&\quad\leq\|\Delta F_i\|_\infty \|x\|_\infty+\|\Delta G_i\|_\infty+\|\hat{F}_i\|_\infty\epsilon.
	   \end{align*}
 Alternatively, we bound the second term as follows
   \begin{align*}
		&\|F_i \Delta x+\Delta F_ix +\Delta F_i\Delta x +\Delta G_i\|_\infty \\
			&\quad\leq\|F_i \Delta x\|_\infty+\|	\Delta F_ix\|_\infty+	\|\Delta F_i\Delta x\|_\infty+\|\Delta G_i\|_\infty\\				
														&\quad\leq\|F_i\|_\infty\|\Delta x\|_\infty+\|\Delta F_i\|_\infty\|x\|_\infty  \\
														&\quad\quad+\|\Delta F_i\|_\infty \|\Delta x\|_\infty+ \|\Delta G_i\|_\infty \\
												&\quad\leq	\epsilon(\|F_i\|_\infty+n\|x\|_\infty+n\epsilon+1).
											      \end{align*}										      
		Combining the inequalities above completes the proof.
\end{proof}

	Bound \eqref{eqn:aposteriori} uses the data obtained before and after quantization, therefore we call it \emph{a posteriori}. 
In contrast, bound~\eqref{eqn:apriori}, called \emph{a priori}, is more pessimistic, being based on the true data only, without the quantized data explicitly appearing.

In bounds \eqref{eqn:aposteriori} and \eqref{eqn:apriori} and in the expression for $\delta$,  the norms of the state vector can be further bounded, e.g.,\ since $x\in P_j$, by the largest norm of any vector in the region $P_j$. If it is not known which region the state belongs to, we can use as a universal upper bound the largest norm of any vector in the state space, assuming that the latter is bounded. 

Theorem \ref{thm:thm1_update} implies that the control error is significantly reduced if the gains $F_i$ and offsets $G_i$ are quantized with high precision. Theorem \ref{thm:thm1_update} also suggests choosing different precisions for quantizing different facets  maintaining a uniform level of control accuracy over the state space, e.g., if $\|(F_i-F_j)h'\|_\infty$ is small, the hyperplane $hx=k$ can be quantized with low precision, without significantly affecting the control computation accuracy. 

\subsection{Rescaling}\label{sub:res}
The accuracy of $u(x)$ depends on how small $\epsilon$ and $\delta$ are. To make $\epsilon$ and $\delta$ small, one approach is to rescale the regions, $h$, and $k$, 
such that $$\frac{hD^{-1}Dx}{\max(\|hD^{-1}\|_1,|k|)}\leq \frac{k}{\max(\|hD^{-1}\|_1,|k|)},$$
where $D$ is a diagonal matrix with  $\|Dx\|_\infty\leq 1.$
 Let $Dx$,  $hD^{-1}/\max\left(\|hD^{-1}\|_1,|k|\right)$, and $k/\max\left(\|hD^{-1}\|_1,|k|\right)$ be our new $x$, $h$, and $k$, respectively. We have $\|x\|_\infty\leq 1$, $|hx|\leq 1$, and $|k|\leq 1$.  Our new $\delta$ is  less than $\epsilon(n+2+n\epsilon)$. The control law has the form 
 $u_i(x)=F_iD^{-1}x+G_i$,  where $x\in P_i,\,  i=1,\ldots, n_r.$
Consequently, 
 \begin{align*}
			\|\hat{u}(\hat{x})-u(x)\|_\infty
			&\leq\frac{\delta}{\|h\|_2^2}\|(F_i-F_j)D^{-1}h'\|_\infty+\\
			          & \quad \epsilon\|F_iD^{-1}\|_\infty+n\epsilon_1\|x\|_\infty+n\epsilon\epsilon_1+\epsilon_1,		
															\end{align*}		
where $\epsilon$ is taken, after rescaling, for the regions  and $\epsilon_1$ is taken, after rescaling, for the input control.
					
Rescaling of the state is evidently equivalent to changing the units of the components of the state vector. The bounds show that it may be beneficial to choose the units in such a way that the state space is balanced in size in all components. 
	
	\section{Test Results}\label{sec:test}
	
	In this section, we present tests for a double integrator 
\begin{eqnarray*}
x(k+1)&=&Ax(k) +Bu(k)\\
y(k)&=&Cx(k)+Du(k),
\end{eqnarray*}	
where 
\[
A=\left[\begin{array}{ll}
1&1\\
0&1\end{array}
\right], B=\left[\begin{array}{l}
0\\
1
\end{array}\right], C=\left[\begin{array}{ll}
1&0\\
0&1\end{array}
\right], D=\left[\begin{array}{l}
0\\
0
\end{array}\right].
\]
The state $x$ satisfies  the constraints $\left[\begin{array}{l} -15\\ -15 \end{array}\right]\leq x\leq \left[\begin{array}{l} 15\\ 15 \end{array}\right]$ and the input $u$ satisfies the constraints $-1\leq u\leq 1.$

		In our tests, we use a fixed point number format which  has a specific number of bits  reserved for the integer part and a specific number of bits reserved for the fractional part. We use the MATLAB function fi with $a$-bit total word length, $1$-bit for sign and $b$-bit fraction length, such that, e.g.,\ 
		\[\hat{x}=\fif(x,1,a,b)=x+\Delta x,\] 
		where $\|\Delta x\|_\infty\leq \epsilon=2^{-b}$ for a vector $x$. 
The quantization errors $\Delta H_i,$ $\Delta K_i$, $\Delta F_i$, and $\Delta G_i$ are known,  given the number $b$ of bits for the fractional part. Every component in $\Delta H_i,$ $\Delta K_i$, $\Delta F_i$, and $\Delta G_i$ is bounded by $2^{-b}.$		
 
We generate a consistent uniformly distributed random state $x$, and for every such a state compute the a priori and a posteriori bounds of absolute errors of $u(x)$ as described in Theorem \ref{thm:thm1_update}, as well as the actual absolute error of $u(x)$ obtained by a fixed point model of the controller. We~eliminate the states with very small  a posteriori bounds and actual errors (due to the fact that in some regions the gain is nearly zero), order the remaining states by sorting the a priori errors, and plot in Fig.~\ref{fig:2dexp} for two values of $a$ and $b$.
 \begin{figure}[ht]
 \begin{center}$
 \begin{array}{c}
 \includegraphics[width=1\linewidth]{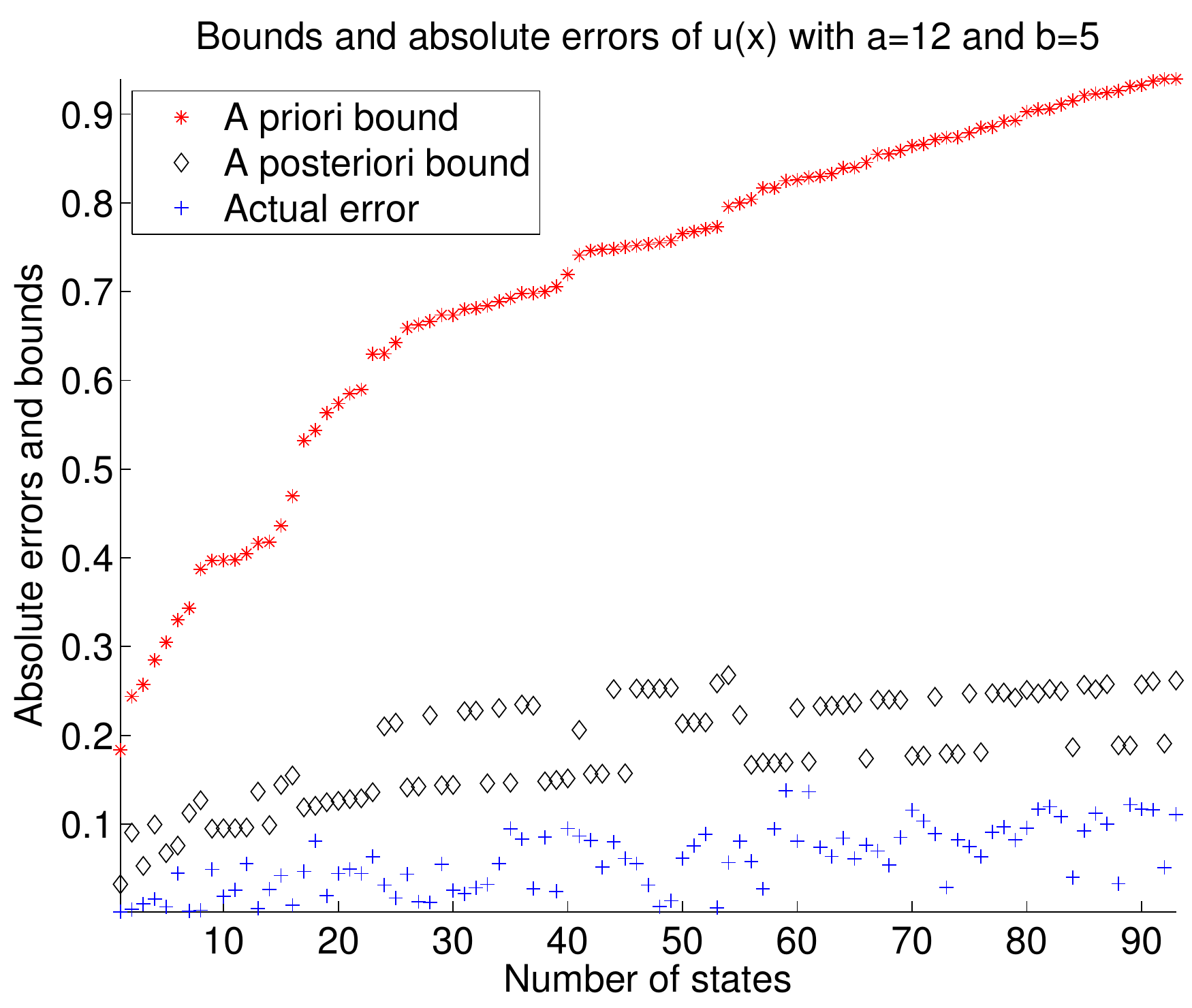} \\
 \includegraphics[width=1\linewidth]{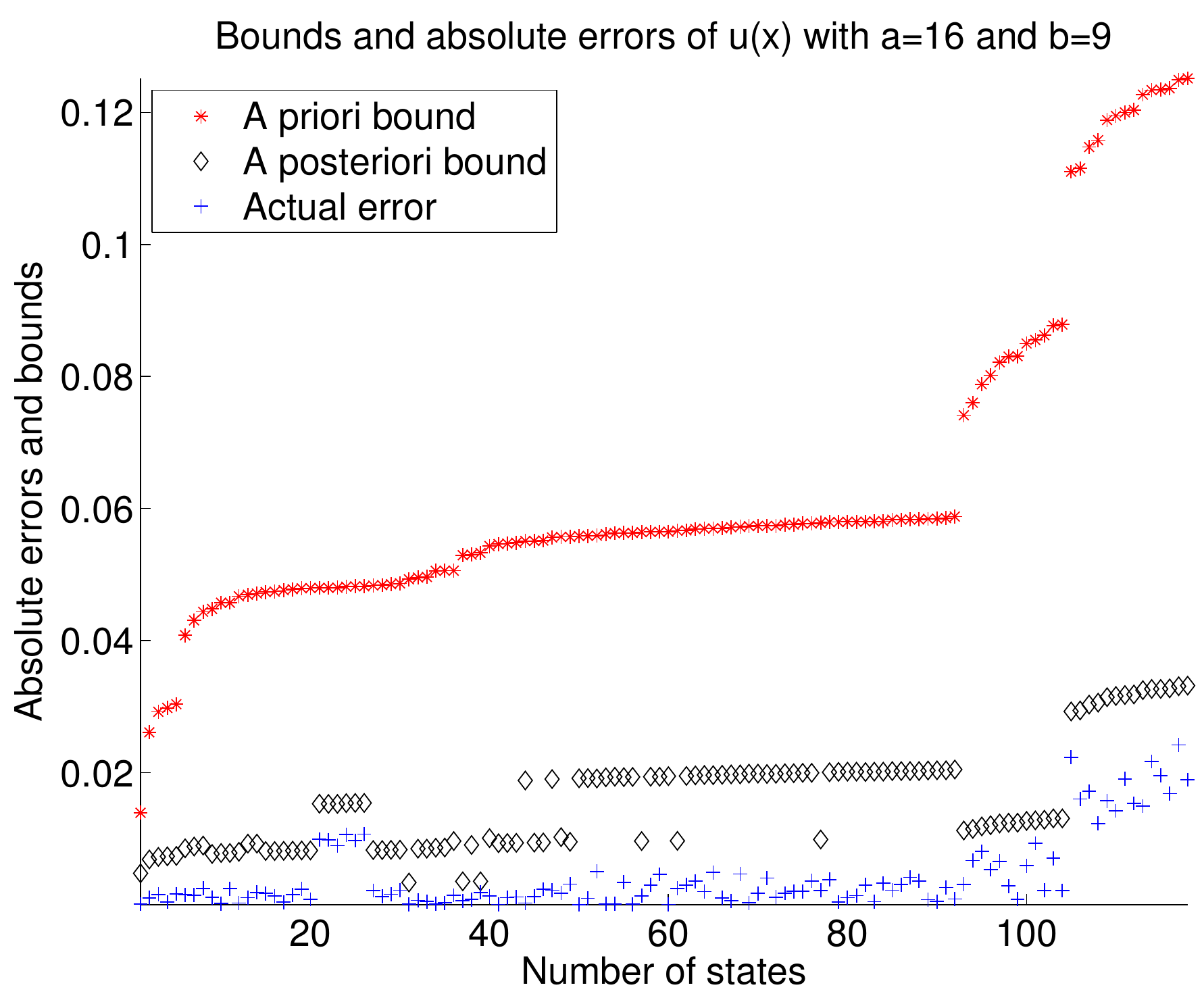}
 \end{array}$
 \end{center}
 \caption{Evaluating a-bit word total length and b-bit fraction length. Top:  a priori bounds, a posteriori bounds and  real computation absolute errors with $a=12$ and $b=5$.
 Bottom:  both bounds of absolute errors of $u(x)$ and  absolute errors in real computation with $a=16$ and $b=9$.   
				}\label{fig:2dexp}
 \end{figure} 

In Fig. \ref{fig:2dexp} top (bottom) panel for  $a=12$ ($a=16$) and $b=5$ ($b=9$),  the maximal difference between our a priori bounds and actual errors is about $0.9$ ($0.12$), the maximal difference between our a posteriori bounds and actual errors is about $0.26$ ($0.03$), and the maximal actual error is about $0.1$ ($0.02$), i.e., approximately $300\%$ ($1000\%$) compared to the quantization error $2^{-5}\approx 0.03$ ($2^{-9}\approx 0.002$).  

We observe in  Fig. \ref{fig:2dexp} bottom panel that the bounds and the actual errors form clusters. These clusters correspond to different hyperplanes, demonstrating that it may be advantageous to quantize hyperplanes using a plurality of precisions aiming at a uniform behavior of the bounds and the actual errors of the control over the state space. Next, we specifically test the states near the common hyperplanes between some pairs of neighbor regions to classify these hyperplanes in terms of their sensitivity to the quantization.

The results are summarized in Table \ref{tab:title}. For example, the bounds and the real errors are less than $10^{-4},$ if the state is in region $P_1$ or $P_9$ and after quantization the state jumps into region $\hat{P}_9$ or $\hat{P}_1.$ Checking the controller data, we find that the gains in the regions $P_1$ and $P_9$ are nearly zero, thus, naturally, the control does not change if the state is in $P_1$ and $P_9$, which is also well captured by our a posteriori bound, thus the hyperplane separating the regions $P_1$ and $P_9$ can be quantized with very low precision. We notice several other trivial pairs of the regions in Table \ref{tab:title}, identified by the small bounds and errors. 
	\begin {table}[ht]
	\begin{center}
	\begin{tabular}{ |c|c|c|}	
   \hline 
 Neighboring & Maximal  & Maximal  \\
      regions  & a posteriori bound  &  actual  error    \\   \hline
  1 and 9&  less than $10^{-4}$& less than $10^{-4}$\\ \hline
	2 and 8& less than  $10^{-4}$ &less than $10^{-4}$\\ \hline
	 3 and 5& 0.19& 0.15\\ \hline
		3 and 12 &0.25& 0.12\\	 \hline			
	4 and 7& less than $10^{-4}$& less than $10^{-4}$ \\ \hline
	5 and 6&less than $10^{-4}$& less than $10^{-4}$ \\ \hline
	6 and 9& less than $10^{-4}$& less than $10^{-4}$\\ \hline
	7 and 8 &less than $10^{-4}$& less than $10^{-4}$ \\ \hline
	7 and 13& 0.21& 0.12 \\ \hline
	10 and 13& 0.26& 0.13\\ \hline
   \hline
\end{tabular}
\end{center}\caption{A posteriori bounds and real errors for the  states near a single common hyperplane with $a=12$ and $b=5$}\label{tab:title} 
\end{table}
Let us check closely one nontrivial pair, e.g., $P_3$ and $P_5$. The maximal a posteriori bound is $0.19$ and the largest actual error is $0.15$,  if the state is in region $P_3$ or $P_5$ and after quantization the state jumps into region $\hat{P}_3$ or $\hat{P}_5.$  Fig.~\ref{fig:7dexp} plots the a posteriori bounds and the actual errors for this case. We observe that our a posteriori bounds always bound above the actual errors and are reasonably sharp. 
		\begin{figure}[ht]
 \begin{center}
 \includegraphics[width=1\linewidth]{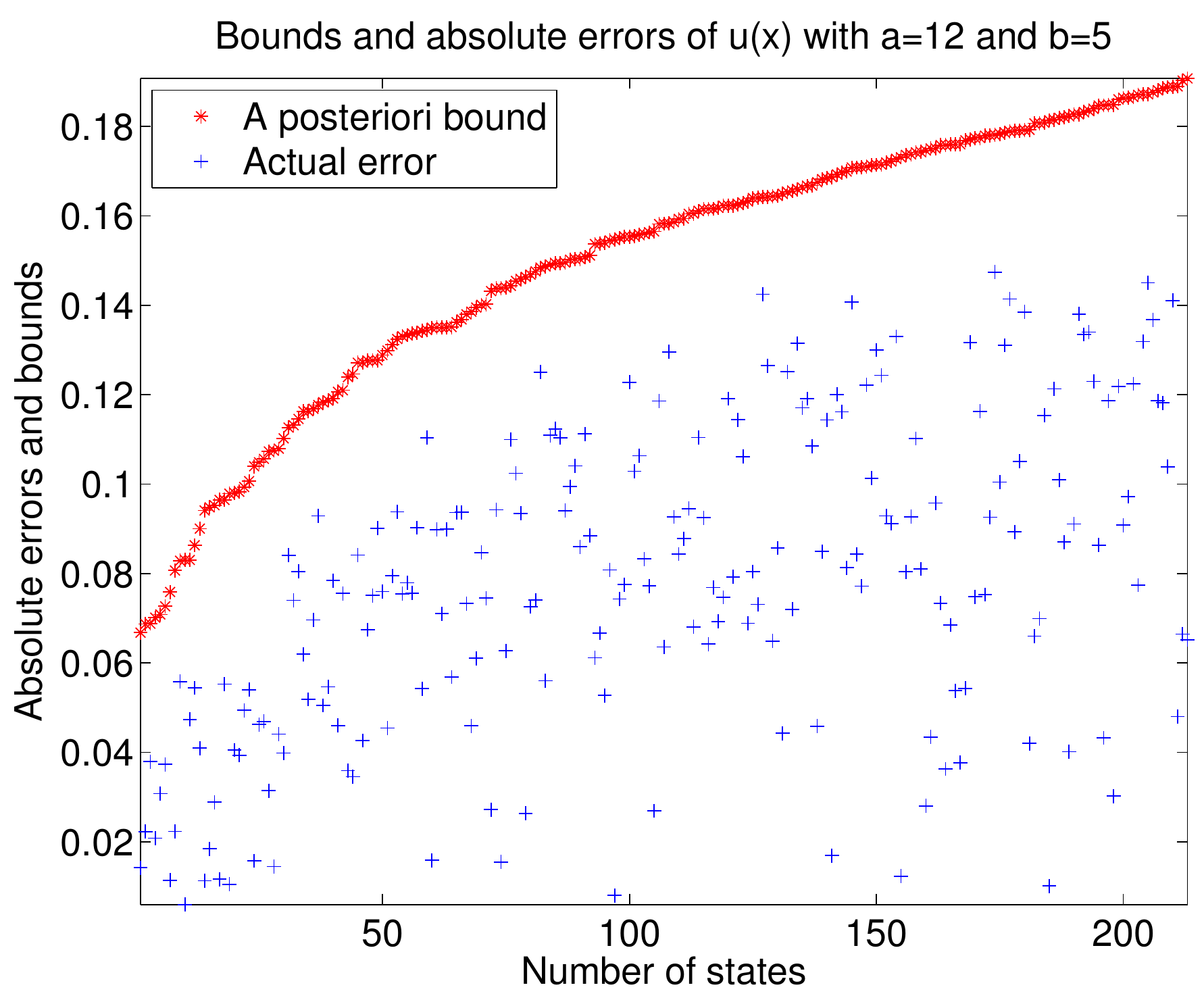} 
 \end{center}
 \caption{A posteriori bounds and real errors with $a=12$ and $b=5$ for the states near the common hyperplane between regions
3 and 5.}\label{fig:7dexp}
 \end{figure}

		\begin{figure}[ht]
 \begin{center}
 \includegraphics[width=1\linewidth]{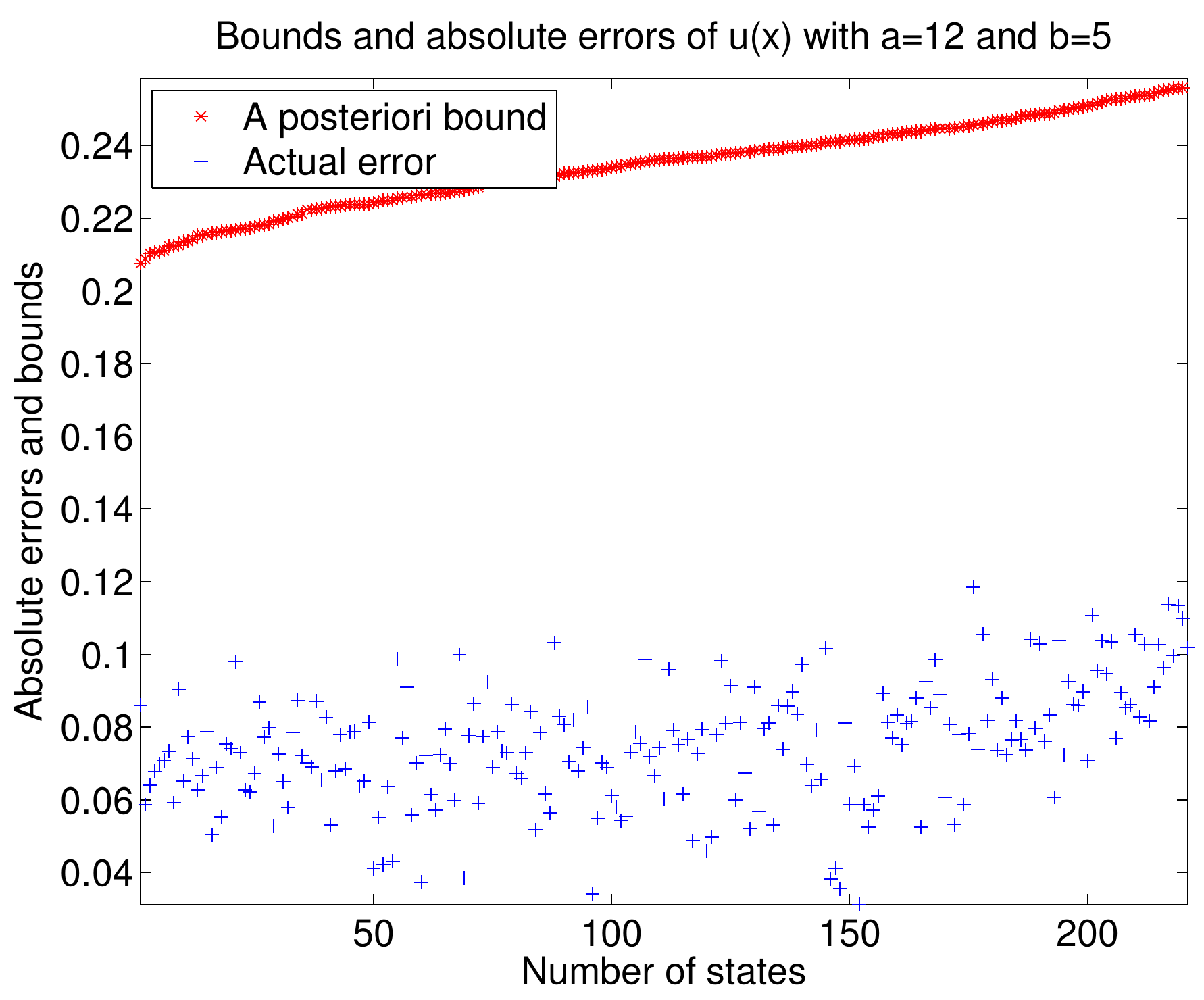} 
 \end{center}
 \caption{A posteriori bounds and real errors with $a=12$ and $b=5$ for the states near the common hyperplane between regions
3 and 12.}\label{fig:4}
 \end{figure}
		\begin{figure}[ht]
 \begin{center}
 \includegraphics[width=1\linewidth]{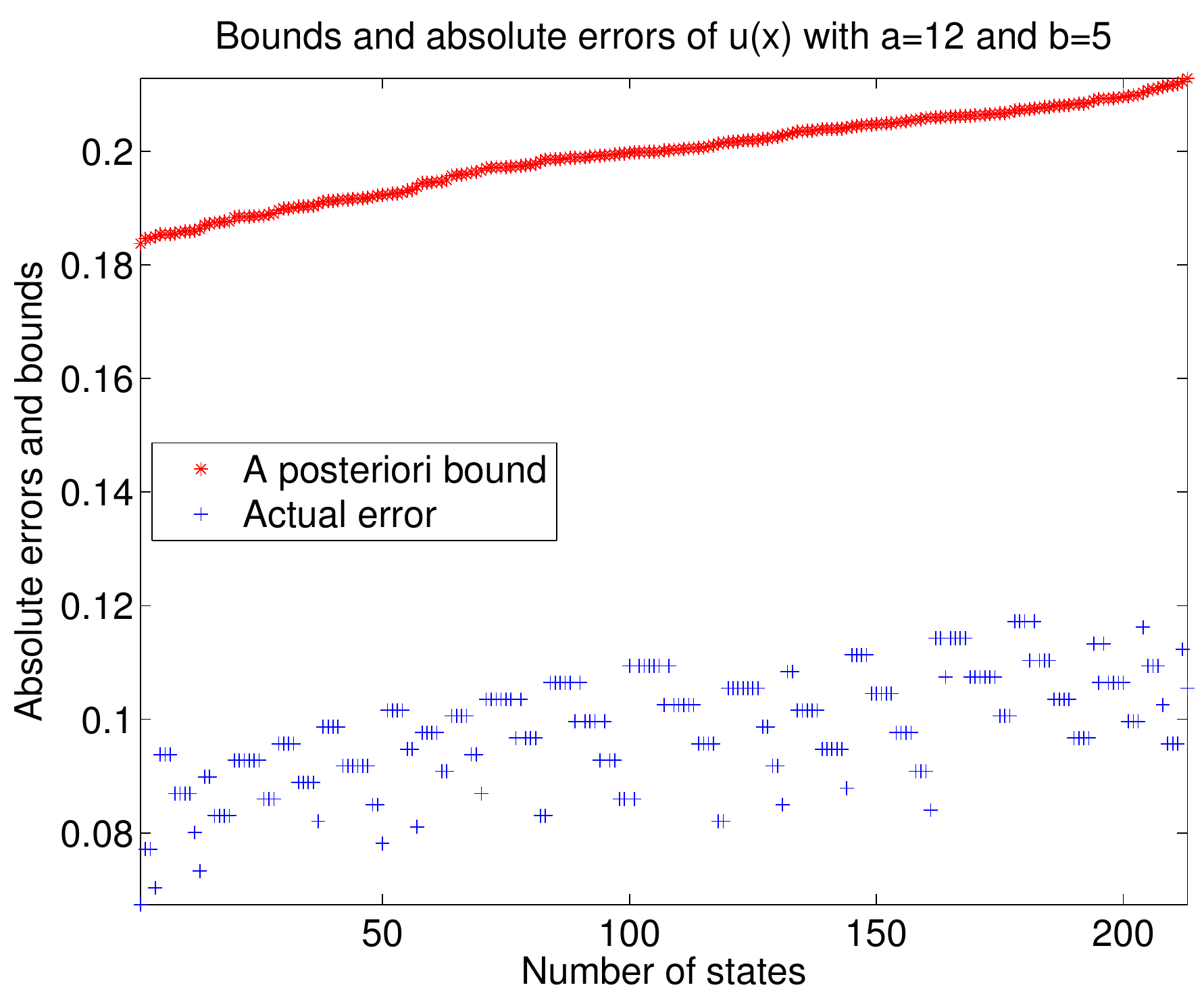} 
 \end{center}
 \caption{A posteriori bounds and real errors with $a=12$ and $b=5$ for the states near the common hyperplane between regions
7 and 13.}\label{fig:5}
 \end{figure}
 		\begin{figure}[ht]
 \begin{center}
 \includegraphics[width=1\linewidth]{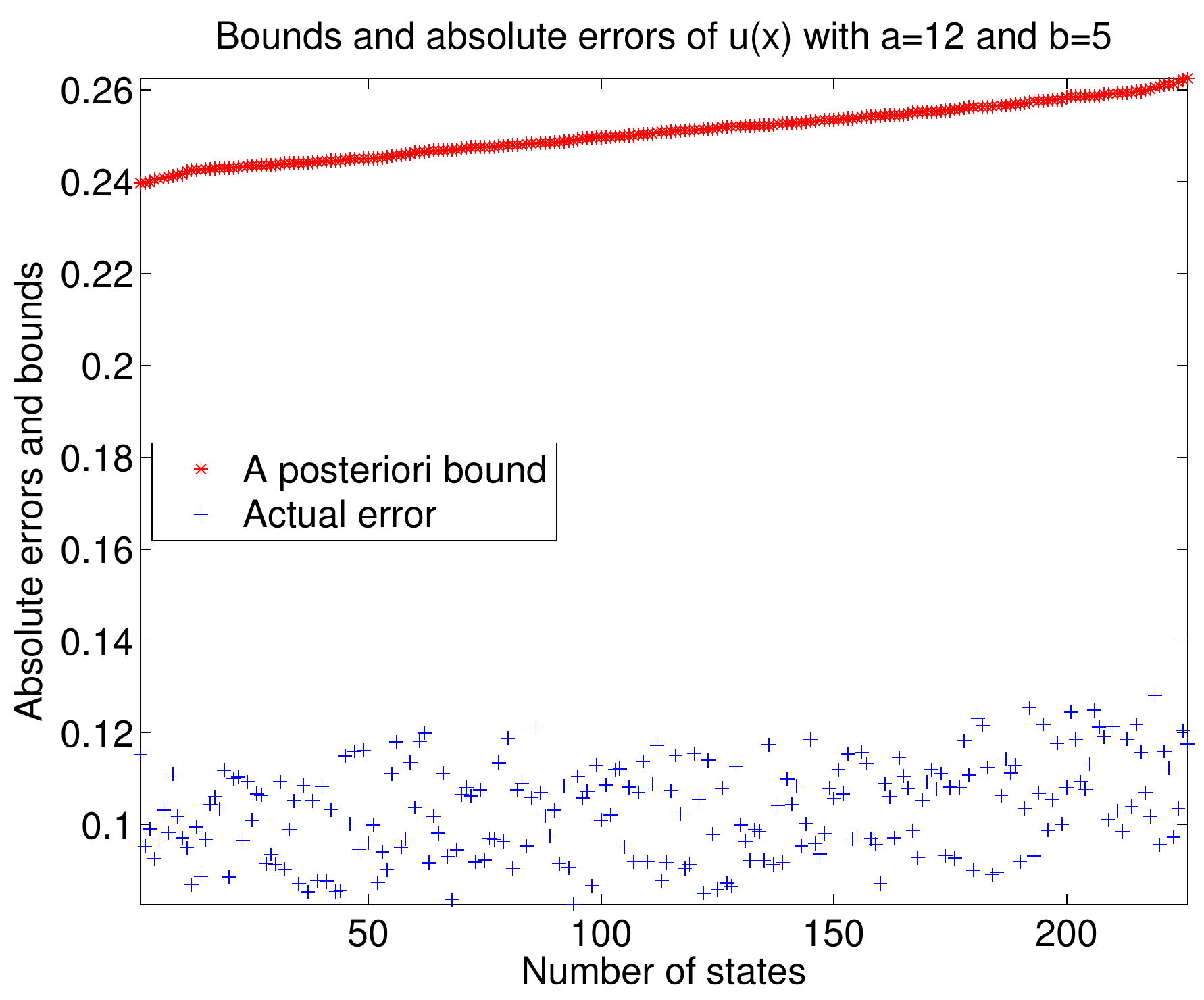} 
 \end{center}
 \caption{A posteriori bounds and real errors with $a=12$ and $b=5$ for the states near the common hyperplane between regions
10 and 13.}\label{fig:6}
 \end{figure}
 
Another nontrivial example is the $P_3$ and $P_{12}$ pair. The maximal a posteriori bound is $0.25$ and the largest actual error is $0.{12}$.  Fig.~\ref{fig:4} plots the a posteriori bounds and the actual errors for the $P_3$ and $P_{12}$ pair.
Fig.~\ref{fig:5} plots the similar data for the $P_7$ and $P_{13}$ pair, and Fig.~\ref{fig:6} for the $P_{10}$ and $P_{13}$ pair. We observe in these figures that our a posteriori bounds overestimate the actual errors about two times.  

We finally note that our bounds in all tests depend on the norm of the state, not on the state itself. Bounding above the norm of the state would make the bounds state-independent.

\section*{Conclusions}\label{sec:conclusion}
EMPC data are quantized and stored in the memory of the controller. 
A state-of-the-art method for determining the precision of the operations in the controller uses a subjective decision based on an ad hoc educated guess of an engineer designing the controller for the given system. Such a subjective decision does not usually guarantee a specific level of accuracy of the control. The validity of the quantized data, representing the system in the controller is typically checked numerically versus the true data on randomly selected state vectors. The state space in true precision includes so many vectors that an exhaustive validation is impractical even off-line. Inaccurate on-line computation of the control can result in suboptimal control, system malfunctions, and failures. 
 
   The actual EMPC control errors can be much larger compared to the quantization error, if the state  is near a hyperplane separating neighboring regions with large and different gains, requiring a special analysis. 
We propose restricting random state vectors to the neighborhoods of the region facets, dramatically decreasing the off-line computational time for numerically checking the control accuracy versus the true data. 
We analyze the EMPC control accuracy deriving upper bounds under an assumption on the quantization error,  typical for a fixed point arithmetic, commonly used in controllers to improve on-line performance. An influence of a rescaling of the state space on the accuracy of the control computation is examined. 
Using our bounds, one can determine the required precision of the quantization and estimate the accuracy in the control input, designing the controller. It is shown that various EMPC data have different sensitivity with respect to the  precision of the quantization, e.g., the gains and offsets are preferred to be stored with high precision. 
We discover that it may be advantageous to use different precisions quantizing various hyperplanes, where the data larger affecting the accuracy of the controller are stored with a greater precision, compared to the data less affecting the accuracy of the controller. Numerical tests for a simple double integrator system support our theory and conclusions. 
Our future work concerns probabilistic approaches, taking into account that the state vector may follow trajectories rather than filling the whole state space.






\bibliographystyle{IEEEtran}
\bibliography{analysis1}

\begin{thebibliography}{10}
\providecommand{\url}[1]{#1}
\csname url@samestyle\endcsname
\providecommand{\newblock}{\relax}
\providecommand{\bibinfo}[2]{#2}
\providecommand{\BIBentrySTDinterwordspacing}{\spaceskip=0pt\relax}
\providecommand{\BIBentryALTinterwordstretchfactor}{4}
\providecommand{\BIBentryALTinterwordspacing}{\spaceskip=\fontdimen2\font plus
\BIBentryALTinterwordstretchfactor\fontdimen3\font minus
  \fontdimen4\font\relax}
\providecommand{\BIBforeignlanguage}[2]{{%
\expandafter\ifx\csname l@#1\endcsname\relax
\typeout{** WARNING: IEEEtran.bst: No hyphenation pattern has been}%
\typeout{** loaded for the language `#1'. Using the pattern for}%
\typeout{** the default language instead.}%
\else
\language=\csname l@#1\endcsname
\fi
#2}}
\providecommand{\BIBdecl}{\relax}
\BIBdecl

\bibitem{rawlings2009model}
J.~B. Rawlings and D.~Q. Mayne, \emph{Model predictive control: Theory and
  design}.\hskip 1em plus 0.5em minus 0.4em\relax Nob Hill Pub., 2009.

\bibitem{QB03}
S.~J. Qin and T.~A. Badgwell, ``A survey of industrial model predictive control
  technology,'' \emph{Control engineering practice}, vol.~11, no.~7, pp.
  733--764, 2003.

\bibitem{di2012industry}
S.~Di~Cairano, ``An industry perspective on mpc in large volumes applications:
  Potential benefits and open challenges,'' in \emph{Proc. 4th IFAC Nonlinear
  Model Predictive Control Conference}, 2012, pp. 52--59.

\bibitem{hrovat2012}
D.~Hrovat, S.~Di~Cairano, H.~Tseng, and I.~Kolmanovsky, ``The development of
  model predictive control in automotive industry: A survey,'' in \emph{IEEE
  Conf. Control Applications}, 2012, pp. 295--302.

\bibitem{bem2002}
A.~Bemporad, F.~Borrelli, and M.~Morari, ``Model predictive control based on
  linear programming - the explicit solution,'' \emph{Automatic Control, IEEE
  Transactions on}, vol.~47, no.~12, pp. 1974--1985, 2002.

\bibitem{Bemporad2002}
A.~Bemporad, M.~Morari, V.~Dua, and E.~N. Pistikopoulos, ``The explicit linear
  quadratic regulator for constrained systems,'' \emph{Automatica}, vol.~38,
  no.~1, pp. 3--20, Jan. 2002.

\bibitem{DYBKH11}
S.~{Di Cairano}, D.~Yanakiev, A.~Bemporad, I.~Kolmanovsky, and D.~Hrovat,
  ``Model predictive idle speed control: Design, analysis, and experimental
  evaluation,'' vol.~20, no.~1, pp. 84 --97, 2012.

\bibitem{DDKH12}
S.~Di~Cairano, J.~Doering, I.~Kolmanovsky, and D.~Hrovat, ``{MPC}-based control
  of engine deceleration with open torque converter,'' 2012, pp. 3753--3758.

\bibitem{DTBB12}
S.~{Di Cairano}, H.~Tseng, D.~Bernardini, and A.~Bemporad, ``Vehicle yaw
  stability control by coordinated active front steering and differential
  braking in the tire sideslip angles domain,'' vol.~21, no.~4, pp. 1236--1248,
  2013.

\bibitem{di2012model}
S.~Di~Cairano, H.~Park, and I.~Kolmanovsky, ``Model predictive control approach
  for guidance of spacecraft rendezvous and proximity maneuvering,'' \emph{Int.
  J. Robust and Nonlinear Control}, vol.~22, no.~12, pp. 1398--1427, 2012.

\bibitem{Goodman97}
J.~E. Goodman and J.~O'Rourke, Eds., \emph{Handbook of Discrete and
  Computational Geometry}.\hskip 1em plus 0.5em minus 0.4em\relax CRC Press,
  Inc., 1997.

\bibitem{Tonde03}
P.~T{\o}ndel, T.~A. Johansen, and A.~Bemporad, ``Evaluation of piecewise affine
  control via binary search tree,'' \emph{Automatica}, vol.~39, no.~5, pp.
  945--950, 2003.

\bibitem{bayat2011}
F.~Bayat, T.~A. Johansen, and A.~A. Jalali, ``Using hash tables to manage the
  time-storage complexity in a point location problem: Application to explicit
  model predictive control,'' \emph{Automatica}, vol.~47, no.~3, pp. 571--577,
  2011.

\bibitem{geyer04}
T.~Geyer, F.~D. Torrisi, and M.~Morari, ``Optimal complexity reduction of
  piecewise affine models based on hyperplane arrangements,'' in \emph{American
  Control Conference, 2004. Proceedings of the 2004}, vol.~2.\hskip 1em plus
  0.5em minus 0.4em\relax IEEE, 2004, pp. 1190--1195.

\bibitem{kvasnica11}
M.~Kvasnica, J.~L{\"o}fberg, and M.~Fikar, ``Stabilizing polynomial
  approximation of explicit mpc,'' \emph{Automatica}, vol.~47, no.~10, pp.
  2292--2297, 2011.

\bibitem{kvasnica12}
M.~Kvasnica and M.~Fikar, ``Clipping-based complexity reduction in explicit
  mpc,'' \emph{Automatic Control, IEEE Transactions on}, vol.~57, no.~7, pp.
  1878--1883, 2012.

\bibitem{kerrigan12}
E.~C. Kerrigan, J.~L. Jerez, S.~Longo, and G.~A. Constantinides, ``Number
  representation in predictive control,'' in \emph{Proc. 4th IFAC Nonlinear
  Model Predictive Control Conference}, 2012, pp. 60--67.

\bibitem{longo14}
S.~Longo, E.~C. Kerrigan, and G.~A. Constantinides, ``Constrained lqr for
  low-precision data representation,'' \emph{Automatica}, vol.~50, no.~1, pp.
  162--168, 2014.

\end{thebibliography}

\end{document}